\documentclass[conference]{IEEEtran}
\IEEEoverridecommandlockouts
\usepackage{cite}
\usepackage{amsmath,amssymb,amsfonts}
\usepackage{algorithm}
\usepackage{graphicx}
\usepackage{textcomp}
\usepackage{tcolorbox}
\usepackage{xspace}
\tcbuselibrary{skins, breakable, theorems}
\usepackage{flushend}
\usepackage{amsthm}
\usepackage{subcaption}
\usepackage{algorithm}
\usepackage{algcompatible}
\usepackage{comment} 
\usepackage{xcolor}
\newcommand{\stef}[1]{{\color{blue}{#1}}}
\newcommand{\protocol}{TrustMix\xspace}
\def\BibTeX{{\rm B\kern-.05em{\sc i\kern-.025em b}\kern-.08em
    T\kern-.1667em\lower.7ex\hbox{E}\kern-.125emX}}
\newtheorem{theorem}{Theorem}
\newtheorem{definition}{Definition}
\begin{document}

\title{TrustMix: How to Mix Messages in a Mobile Ad-hoc Network}

\author{\IEEEauthorblockN{Yu Shen, Aiswarya Walter, Stefanie Roos}
\IEEEauthorblockA{\textit{} \\
\textit{RPTU University Kaiserslautern-Landau}\\
Kaiserslautern, Germany \\
\{yu.shen, stefanie.roos\}@cs.rptu.de, a.walter@edu.rptu.de}
}
\maketitle
\begin{abstract}

Mix networks are a highly effective way to achieve anonymity, defending against a wide range of traffic-analysis attacks.
However, mix networks are usually designed for infrastructure networks and cannot be directly applied in the context of mobile ad hoc networks (MANETs). The few existing solutions for MANETs require advance knowledge of the topology or a trusted central party.

In this paper, we present TrustMix, a mix protocol for MANETs that operates without any central trusted party. In TrustMix, parties join groups and then messages are forwarded via multiple groups to provide anonymity. With TrustMix, users only need to find a party nearby that they consider trusted. They then forward the message to  this party's group, and the party shuffles messages before forwarding to other groups, meaning that the original message and the forwarded message cannot be linked. Furthermore, even if the chosen party is adversarial, they can only break the anonymity if all parties in their group are adversarial as all of them contribute to the shuffling. In addition to anonymity, TrustMix also enforces rate limits on the number of messages through the use of linkable ring signatures, which allows detecting that parties send more messages that allowed without revealing identities. 

We prove the security of our protocol in the random oracle model. We evaluate its anonymity using an existing mix-network simulator and show that TrustMix significantly improves message anonymity. Finally, we present a proof-of-concept Android implementation and show that TrustMix achieves acceptable throughput with 5 mobile devices.
\end{abstract}

\begin{IEEEkeywords}
Anonymous communication, Mobile Ad-hoc network, Trust
\end{IEEEkeywords}

\section{Introduction}

Instant messaging applications such as WhatsApp and Signal are widely used to communicate, read the news, or even do business~\cite{protestCommunication}. Most  applications require internet access, which may in certain scenarios not be available, e.g., because of an internet shutdown. For example, authorities may block the Internet in a region during a protest, as has occurred in Russia and Iran~\cite{iranInternetShutdown}. Then, most communication tools are not available, and it becomes extremely difficult to communicate with friends or relatives and to access or disseminate information. 

There already exist tools such as Firechat and Bridgefy that do not require internet access~\cite{Bridgefy}. Instead, they use low-range radio techniques such as Bluetooth and WIFI Direct to transfer messages. When a sender sends a message, they simply broadcast the message to all of their neighbors, and other users will broadcast the message further until all users receive this message. The broadcast of messages ensures that the receiver always gets the message if there is at least one path between the sender and the receiver. However, a simple broadcast opens the door for denial-of-service (DoS) attacks, in particular allowing adversaries to flood the network with dummy messages, meaning that a limited number of malicious parties can easily paralyze the network. Moby and Rangzen mitigate DoS attacks by introducing the use of a trust mechanism~\cite{moby,rangzen}. With the trust mechanism, only messages from trusted users are forwarded, with the expectation that messages sent by malicious parties end up being dropped before they reach a considerable fraction of the network due to the low trust placed in the parties. 

However, none of the protocols, not even Moby and Rangzen, provide anonymity in the face of a global passive adversary, as they simply encrypt the message and assume that no one except the receiver can obtain the plaintext message. In such systems, the global passive adversary can easily trace every encrypted message back to the original sender through traffic analysis, enabling the adversary to derive relationships between users, which can reveal sensitive information such as political affiliation.

Since the introduction of mix networks by Chaum~\cite{ChaumProof92}, they have been widely used in distributed systems to increase anonymity~\cite{nym}. The core idea of a mix network is to have a set of mix nodes collect, transform, and shuffle messages to break the link between senders and receivers. The original design does not suit the nature of MANETs because it assumes fixed topologies and routing strategies. Existing mix-inspired MANETs communication protocols such as ARM and ANODR~\cite{arm,anodr} attempt to handle network dynamics through route discovery and by forwarding messages along pre-established paths. In practice, however, intermediate nodes may go offline at any time and network topology can change rapidly, making such route-based designs unsuitable for highly dynamic MANETs. Moreover, the request messages and reply messages to discover routes make them vulnerable to denial of service attacks, as adversaries can flood the whole network with route discovery messages.

Motivated by these limitations, we propose \protocol, a new anonymous communication protocol that combines trust-based forwarding with decentralized mix networks to provide anonymity and DoS resistance under dynamic network conditions.
We introduce a trust mechanism into \protocol to mitigate the impact of DoS attacks similar to the Moby network~\cite{moby}. In \protocol, messages are only exchanged among trusted parties. Hence, messages from malicious parties are dropped with high probability. To ensure the anonymity of the messages, we integrate a mix network protocol in \protocol. To overcome the limitations of a dynamic network topology, we use a similar idea as Atom~\cite{Atom} to build a free route mix network. In Atom, a mix node consists of a group of parties that collaboratively re-encrypt and shuffle messages,  meaning messages are forwarded to arbitrary next hops without revealing plaintexts. 
Similarly, in \protocol, users self-organize into groups, and each group collectively acts as a mix node. Senders only select the first group, while subsequent routing decisions are made by intermediary groups. Therefore, users do not need to determine every mix node and messages are resilient to faulty mix nodes. To select groups, each party has a set of other parties they consider trusted and forward to a group with at least one party they trust. This trusted party then assumes a key role in the re-encryption of the message and they choose the next group to forward to, though even if the trust is misplaced, \protocol provides anonymity as long as a sufficient number of group members are honest.

We prove the security of \protocol and evaluate its anonymity using a simulator, finding that the anonymity provided can be up to 8 to 10 bits. In addition to this, we implement a prototype of \protocol on Android devices and benchmark its delay and throughput, showing that the latency increases approximately linearly in the number of group members. 

In short, \protocol has the following contributions:
\begin{itemize}
\item We design an anonymous trust discovery protocol that enables DoS mitigation without revealing the relationship of users.
\item We design the first trust-based re-encryption mix protocol for MANETs.
\item We evaluate the anonymity of \protocol using a simulator. 
\item We implement an Android WIFI Aware based prototype and evaluate the performance of \protocol.   
\end{itemize}

\section{Related works}
In this section, we introduce the works most closely related to \protocol, including three areas: mix networks, anonymous communication protocols for mobile ad hoc networks, and trust mechanisms for decentralized systems.
\subsection{Mix networks}
Since the proposal of mix networks by Chaum~\cite{ChaumProof92}, they have become a fundamental architecture for anonymous communication. In Chaum’s original design, the sender selects a sequence of mix nodes and applies layered encryption. Afterwards, each mix node collects incoming messages, decrypts them, and forwards them to the next mix node or the receiver, thereby breaking the linkability between senders and receivers of messages. Vuvuzela~\cite{Vuvuzela} employs a single chain of mix nodes combined with heavy cover traffic to provide strong privacy for message metadata. Similarly, Loopix~\cite{Loopix} uses continuous cover traffic 
but arranges mixes in layers so that each route contains one mix from each layer. 

All systems above require that the topology of the mix network be known to users when sending a message, and changes of the topology or availability of mix nodes may lead to protocol failure. In an MANET with dynamic topology, it is difficult for users to obtain the network topology on time.  

In contrast to sender-routed mix networks, Atom introduces the idea of a re-encryption-based mix network where intermediaries are able to re-encrypt a message with the public key of an arbitrary party and forward the message to this party~\cite{Atom}. Atom also organizes multiple nodes into a single mix group, allowing correct mixing of messages even when some nodes are unavailable. These properties make Atom suitable for MANETs due to its fault tolerance and flexible message routing method. However, Atom relies on reliable group assignment by a trusted party, an assumption that is difficult to realize in a decentralized MANET.

\subsection{Anonymous MANETs}
Due to the flexibility of mobile ad hoc networks, MANET-based applications such as Bridgefy have become common tools for communication during Internet shutdowns or blackouts~\cite{Bridgefy}. To provide anonymity in such environments, several protocols have been proposed with different design goals. Rangzen is designed to broadcast a message anonymously in a MANET with potential adversaries that attempt to paralyze the network~\cite{rangzen}. Rangzen uses a prioritization mechanism, in which messages from parties that share more mutual contacts with the current node have a higher priority. In such a way, messages from adversaries likely have a lower priority and may be dropped by intermediary parties under congestion. Similarly to Rangzen, Moby performs anonymous peer to peer communication in a MANET using a trust based prioritization mechanism~\cite{moby}. Amigo further expands the use cases of MANETs by supporting secure group communication~\cite{amigo}, supporting dynamic group creation, membership management, and anonymous messaging over MANETs.
However, all of the anonymous communication protocols discussed above assume that the adversary cannot monitor the traffic of the entire network. Consequently, these protocols remain vulnerable to traffic analysis attacks, motivating the need for systems that provide stronger message anonymity.

\subsection{Trust mechanisms for decentralized systems}
Trust mechanisms are widely studied in decentralized systems to mitigate active attacks without centralized authorities. As mentioned in the previous section, both Moby and Rangzen use a trust mechanism where each party assigns a trust score to each message and prioritizes messages according to trust values. In Rangzen, message receivers calculate the trust value of a message based on the number of social contacts shared between them and the message sender. In contrast, the trust value of a message in Moby is updated at every forwarding step, where the new trust value is calculated based on the previous trust value of the message and the trust value of the forwarder.

Anix introduces a method to remotely establish trust relationships in MANETs over several rounds of communication\cite{anix}. Concretely, each user generates a new pseudonym every time they send a message. When a user receives a trustworthy message together with the public key corresponding to the sender’s pseudonym, the receiver can choose to initiate the trust establishment process by broadcasting their identifier encrypted by the public key, allowing the original sender to track subsequent messages. If choosing to establish mutual rust, the original sender broadcasts the encrypted form of their own identifier in return, leading to the establishment of two-way trust. In \protocol, we adopt a trust establishment process similar to Anix to find trusted parties near each other without revealing the identifier of each participant.

\section{Threat model}
We consider a mobile ad hoc network during internet shutdowns during which devices communicate without fixed infrastructure and dynamically organize into groups that collaboratively perform message mixing and forwarding. The network topology is highly dynamic and communication links are unreliable.  
We focus on three adversarial goals: denial-of-service, modifying the content of messages, and linking senders and receivers to break anonymity. 

For denial-of-service attacks, attacks that completely prevent communication, physical capture of all devices, and global radio jamming are out of scope. Rather, we focus on local, internal attacks where malicious parties flood the network or drop messages.

For the anonymity and message integrity, we adopt a threat model similar to that of Moby~\cite{moby} but with a stronger global passive adversary. We assume that a probabilistic polynomial-time (PPT) adversary can corrupt a subset of nodes and fully control their behavior, but cannot break standard cryptographic primitives. With corrupted nodes, the adversary can read the internal state of the corrupted nodes, modify their behavior, inject, drop, delay, or flood messages, and collude across all compromised devices. In addition, the adversary can observe the traffic patterns of all communication links. Although monitoring of all channels may not be feasible, we provide such information to adversaries to provide strong anonymity guarantees for \protocol.

%During blackouts, we do not consider the use of the Internet as it may be blocked by communication providers. Hence, users of \protocol only use short-range radio based technologies such as WIFI Direct and Bluetooth to communicate with each other. We assume the existence of some trust materials \stef{weird term} such as contact books and contact histories \stef{do you trust everyone in your contact book or contact history? requires more motivation} so users are able to infer whether a party is trustful. 

%Depending on the trust mechanism\stef{is that a mechanism or an assumption? I am not sure what about the way we define trust is a mechanism?}, the risk of trusting a user may vary. With a direct trust mechanism where users only trust people they know in reality, the trusted party is honest with a high probability. For some indirect trust mechanisms where users trust unknown people, the trusted party may not be reliable \stef{is there a difference between reliable and trusted?}. In this paper, we assume the trusted party always behaves honestly because studying how to build a reliable trust mechanism is out of the scope of this paper and we simply assume a reliable trust mechanism is used. \stef{that seems like a huge problem, normally you want a protocol to fail gracefully if your assumptions are not met}

\section{Protocol}
\label{sec:protocol}
The goal of \protocol is to anonymously mix messages in decentralized mobile ad hoc networks. Unlike Atom, which assigns fixed groups beforehand, we allow users to form a group, leave a group, and join a group during protocol execution, meaning that \protocol can adapt to topology changes. Moreover, we consider the occurrence of failures of users by allowing groups to process incoming messages in the presence of several non-responsive members. For $k$ non-responsive users, Atom requires at least $k+1$ honest users to ensure the message can be processed correctly. In our protocol, we only require one honest party in the group, regardless of the number of non-responsive users. However, the honest party needs to be chosen and trusted by the previous message forwarders in our protocol. We ensure that if the selected trusted user is honest, incoming messages of all group members are shuffled securely, and no one can link shuffled messages to original messages with a non-negligible probability. To prevent forging and modification of messages by group members as well as DoS attacks, a misbehavior detection mechanism is integrated to find misbehaving parties in our protocol. 

%Note that just using trusted forwarders instead of groups is not sufficient as it cannot:
%\begin{itemize}
    %\item keep the content of messages secret: the trusted party learns the content of a message when it decrypts an incoming message, while our protocol keeps the content secret during the shuffling and re-encryption. 
    %\item collect enough messages to mix: group members shuffle messages from all group members in our protocol, but in the naive method, the trusted party can only receive messages from its neighbors unless a multi-hop routing algorithm is added to that method.
%\end{itemize}

Our protocol consists of a setup phase and a mixing phase. The setup can be executed by any set of users to create a set of group public keys for a new group, such that each group member is associated with a group public key. Later, if a party wants to forward a message to this group and a member of this group is trusted by the forwarder, the forwarder can choose the member as the next trusted forwarder and encrypt the message under the corresponding group public key.

In a group, all incoming messages are broadcast to every party in the group, including messages for other groups' group public keys and messages to be mixed. When a new message arrives, group members store it in a message pool until there are enough messages to be mixed. If the message pool is full, the message mix phase is performed. 
The mixing phase has 6 steps:

\begin{itemize}
\item  DoS-resistant incoming message handling: for each incoming message, the previous forwarder appends its signature. This signature is generated using a trusted public key of a member of the current group. Members of the current group verify the signature and broadcast the incoming message if the signature is valid. Otherwise, the incoming message is dropped. If the number of messages signed by the same trusted public key exceeds a limit, subsequent messages signed by the same key are discarded.

\item  Verifiable shuffling: in this step, incoming messages are shuffled to increase the anonymity of messages. Initially, all incoming messages are stored in a list. For each group member, an identical message list is maintained since every message is broadcast to every member. Then, every group member initializes the shuffling of the message list using the group public key assigned to it.

\item  Decryption and check: in the previous step, several messages were shuffled using a public key different from the public key used for encryption. In this step, users decrypt a special message to check and remove messages shuffled using a wrong public key.

\item Anonymous public key re-encryption: after shuffling of messages, the trusted forwarder needs to decide the next trusted forwarder and re-encrypt the message for the next trusted forwarder. The current trusted forwarder asks other group members to re-encrypt the message using an anonymous group public key that is created from the group public key of the next trusted forwarder and a randomly sampled public key to ensure other members are not able to learn who the next trusted forwarder is.

\item Decryption with the anonymous secret key: to remove one layer of encryption added in the previous step, the trusted group member decrypts the re-encrypted message using the corresponding secret key of the randomly sampled public key. Afterwards, the incoming message is encrypted using the group public key of the next trusted forwarder anonymously.

\item Forwarding: group members forward messages to a group decided by the trusted forwarder. 

\end{itemize}

\subsection{Setup}
In the setup phase, we construct one distinct group public key for every group member. We use a Secure Distributed Key Generation scheme to ensure
1. a message can be processed only with the participation of the trusted forwarder and at least $n_{min}-1$ other group members where $n_{min}$ is the minimum number of required participants 
2. a message can be processed when $n_{}-n_{min}$ group members are unresponsive and the trusted forwarder remains responsive where $n_{}$ is the number of members in a group.

To generate a group public key for the $i$-th member of a group, we use the Secure Distributed Key Generation protocol proposed by Gennaro et al.~\cite{dkgprotocol} with threshold $(n_{},2n_{}-n_{min})$ to generate the public key and partial secret keys of the group. After key generation, the $i$-th member holds $n_{}+1-n_{min}$ key pairs for the $i$-th group public key $gpk_i$, while other members hold exactly one pair. Consequently, messages encrypted under $gpk_i$ can be decrypted only by combining the secret keys of the $i-th$ member with at least additional $n_{min}-1$ secret keys.

The group public key generation process is executed periodically, so new users can be included continuously after a waiting time. Members only need to adjust the $n_{}$ value according to the number of members in the group.

\subsection{Anonymous trusted party discovery and DoS prevention}
In this section, we present the design of our anonymous trusted party discovery protocol. It enables users to send messages to nearby trusted parties, while keeping trust relationships secret. We also introduce a trust-based DoS-resistant incoming message handling mechanism in this section.

An advantage of Moby is that it can mitigate DoS attacks by only forwarding and receiving messages from trusted parties. To prevent DoS attacks from other groups, we let group members only receive messages from trusted parties and forward messages only to trusted parties. To achieve this goal, we need a trusted party discovery protocol to let group members know which party in neighboring groups is trusted. A straightforward approach would be to broadcast the public keys of trusted parties. However, this method reveals the trust relationship between users that can compromise privacy and break message unlinkability. For example, if a message is sent from a trusted party of a specific group member and this group member forwards a message later, adversaries are able to link these two messages if the trust relationship between the group member and the message sender is known. 

To avoid this issue, we design an anonymous trusted party discovery protocol to protect the unlinkability of messages. Similarly to Anix\cite{anix}, we assume that there is a trust public key $PK_i$ for each user, and such keys are only known by trusted parties of a user. As the first step in trusted party discovery, each party generates a pseudonym $nym$ every time it joins a group. This pseudonym is essentially a randomly generated public key $PK_{nym}$ and the corresponding private key $SK_{nym}$ is kept secret. To inform trusted parties about a user, this user signs $PK_{nym}$ using its trust secret key $SK_i$. So, trusted parties of this user know that the anonymous public key $PK_{nym}$ is generated by one of its trusted users. We use the key blind signature scheme proposed by Denis et al.~\cite{blindsign} similar to Anix to ensure that other parties are not able to gain knowledge about $PK_{i}$ based on the signatures generated using $SK_{i}$. Afterwards, the signed $PK_{nym}$ acts as the anonymous identifier of the user and is broadcast to all neighboring groups. To prevent DoS attacks, such identification messages also need to be signed collaboratively by group members. Therefore, the identification message of every group member is aggregated and signed by a group public key. Unlike the group public key owned by different members, this group public key is not generated for a specific trusted group mechanism, and each member has the same number of shares for the corresponding private key. In such a way, a group only needs to sign and broadcast the identification message once unless there are new members joining this group. For the corresponding trusted parties in neighboring groups who know the public key, they can verify the signature and know $PK_{nym}$ of the trusted parties in the received identification message.

As mentioned above, group members only process messages signed by a known trusted public key. So, group members need a way to add trusted parties' keys to an identical list maintained by all group members, which we refer to as the trust list, without losing anonymity. $PK_{nym}$ cannot be used directly because it is a long-term key that potentially multiple parties, namely all trusted parties of the corresponding user, can link to the real identity, allowing adversaries to infer relationships through long-term observations. Instead of $PK_{nym}$, we let a party calculate an anonymized trust key $TPK=PK_{1}^{Hash({PK_{2}}||{PK_{nym}})}$ where $PK_{1}$ is the trust public key of the trusted party in the neighbor group, $PK_{2}$ is the trust public key of the user itself, $PK_{nym}$ is the pseudonym of the user, and $Hash$ maps a string to a scalar of the underlying group. The trust key now combines the public key of both parties and also a randomly generated pseudonym. Thus, different users have different trust keys even if they trust the same party. Also, if a user joins a new group, they agree on a new trust key with all of their trusted parties, as $PK_{nym}$ is group-dependent. As a result, unless an adversary knows both $PK_1$ and $PK_2$, it cannot infer trust relationships with non-negligible probability.

After the calculation of anonymized trust keys, group members aggregate their anonymized trust keys using the multi-party private set union (MPSU) protocol proposed by Gao et al.\cite{privateSet} to hide the owner of each anonymized trust key. Such a PSU operation has an upper limit on the set size, so malicious parties are not able to apply DoS attacks by adding a large number of trusted parties. With PSU operations, adversaries cannot link an anonymized trust key to a specific member. The owner of $PK_1$ can calculate the secret key as $TSK = SK_{1}\cdot{Hash({PK_{2}}||{PK_{nym}})}$ where $SK_{1}$ is its own secret key, $PK_{nym}$ is the pseudonym it has received and verified before, and $PK_{2}$ is the corresponding trust public key for the pseudonym, considering ElGamal encryption scheme is used. Later, if it wants to forward a message to the group of the trusted party, it can sign the message using $TSK$ because $TPK$ is added to the neighbor group trust list if the trusted party executes the trust discovery protocol correctly.

To realize a \textbf{DoS-resistant incoming message handling}, all messages except identification messages must be signed with an anonymized trust key. Accordingly, group members process an incoming message only if it carries a valid signature under a key contained in the group’s trust list.
However, using signatures under a fixed anonymized trust key would allow any group member to link multiple incoming messages, since all such signatures are publicly verifiable under the same key. To provide anonymity for signers while still enabling abuse detection, we employ a linkable ring signature (LRS) scheme~\cite{linkableRing}. To sign messages with LRS, the sender of messages needs to know the ring to be used beforehand. Therefore, the trust list is also broadcast to neighboring groups after the anonymized trust keys are aggregated from the group members. Then, the sender can sign a message using $TSK$ it holds and the ring of a group. LRS supports signatures to a specific "event-id". We use the current timestamp as the "event-id". Hence, signatures generated using the same $TSK$ under the same event-id are linkable to each other by the linkability of LRS, and corresponding messages can be identified and dropped by group members. Consequently, a party cannot flood a group with dummy messages. 

The trust mechanism alone is not sufficient to resist traffic analysis, since an adversary can link trusted forwarders with potential senders by observing traffic patterns. In particular, consider a scenario in which several users trust the same group member and forward their messages to that party. Later, the trusted group member will handle all those messages, while other members do not have messages to process. In this case, the adversary can learn that the senders of these messages trust a specific group member. For subsequent messages from the same senders, the adversary can link the outgoing messages of the corresponding trusted group member to messages sent from those senders with a non-negligible probability, which weakens the unlinkability guarantees of \protocol. To avoid this problem, we require potential senders to transmit dummy messages when they have no real messages to send. Assuming an anonymized trust key can be used to sign up to $k$ messages per minute without being linkable, and the owner of this key only has $a$ messages to send, it generates $k-a$ dummy messages and forwards them to the corresponding trusted party. 
In practice, a sender sends $k$ dummy messages every minute and stores messages to be sent in a list. When a dummy message is to be sent, the dummy message is dropped and replaced with the first message in the list if the list is not empty. 
As a result, group members exhibit a uniform outgoing traffic pattern over time, preventing an adversary from inferring relationships between message senders and group members based on the traffic pattern.

%The last issue to tackle is detection of lost or dropped messages, e.g., to enable retransmissions: We use acknowledgments to have the recipient confirm that a message has been delivered successfully. On arrival of a message, the receiver checks whether this message is signed by the key in their trust list. If so, it broadcasts this message to other members. Afterwards, group members sign the hash of the message collaboratively and forward the signature back to the nearby groups periodically. In such a way, group members are able to resend the message in case of message loss or dropping by adversaries. \stef{so there are timeouts involved? how are they chosen?}

\subsection{TrustMix}
%Every time a group of users wants to construct a group or a group wants to add a new user, they run the public key initialization function of the weighted threshold to ensure there is a group public key $gpk_i$ for each group member $u_i$. For $gpk_i$, $u_i$ has the highest weight since it is the public key to be used when $i-th$ member is trusted. 
After the anonymous trusted party discovery, group public keys and identification information are broadcast to neighboring groups so neighbors learn about trusted parties nearby and their group public keys. When a sender sends a message, it randomly chooses a party from all its nearby trusted parties as the trusted forwarder. The sender then encrypts its message under the group public key of the chosen trusted forwarder, signs this message using the corresponding $TSK$, and forwards the resulting ciphertext to the group of the trusted forwarder

When receiving messages, parties wait for enough messages similar to a threshold mix network. If there are enough messages to be mixed, they start the \textbf{verifiable shuffling} as mentioned before. The incoming messages are in the form of $(m\cdot gpk_i^{r_1}, g^{r_1})$ where $r_1$ is an unknown random number and $i$ is the index of the chosen trusted forwarder, as the ElGamal encryption system is used in \protocol. To shuffle messages, the public key of the messages is required. However, each group member has a different group public key, and group members do not know which public key is the correct one. 

We introduce the expansion-and-shuffling procedure to solve this issue. We assume that all parties can be the trusted forwarder for every message and shuffle messages with every group public key.
We denote the incoming message list kept by the member $j$ as list $j$. For messages in the list $j$, member $j$ shuffles them using $gpk_j$. Concretely, a group member samples a new random number $r_2$, calculates $(m\cdot gpk_i^{r_1}\cdot {gpk_j}^{r_2}, g^{r_1}\cdot g^{r_2})$ for every $m$ on the list, and shuffles the ciphertexts obtained with a random permutation. Hence, we can obtain a valid encrypted message $(m\cdot {gpk_i}^{r_1+r_2}, g^{r_1+r_2})$ if the correct public key is used, i.e., $j=i$. Otherwise, we have a message $(m\cdot gpk_i^{r_1}\cdot {gpk_j}^{r_2}, g^{r_1}\cdot g^{r_2})$ that can only be decrypted to a random group element. Afterwards, the shuffled list is forwarded to the next member of the group (the next member is the member $(j+1) \ mod \ n$ if the index of the current member $j$). The group key index $j$ is also sent with the shuffled message so that the member $j+1$ knows the key used to shuffle the received messages. The member $j+1$ shuffles the received messages with $gpk_j$ and sends them to the member $j+2$. This process continues until member $j-1$, which means that every participant has shuffled the messages. For each shuffling, members generate a proof based on the verifiable shuffling algorithm proposed by Groth~\cite{verifiableShuffling}. Hence, they cannot shuffle messages with a wrong group public key. In practice, shuffling processes of different group keys can be parallelized to accelerate the shuffling of messages.

The remaining challenge is how to detect and remove malformed ciphertexts from the shuffled output. To address this, the sender transmits two related ciphertexts with each message: $m1 = (m\cdot {gpk_i}^{r_1}, g^{r_1})$ and $m2 = (m^{-1}\cdot {gpk_i}^{r'_1}, g^{r'_1})$. These two ciphertexts are processed together throughout the shuffling procedure and enable group members to verify message correctness without revealing the content of a message.
Both $m_1$ and $m_2$ are shuffled using the same expansion-and-shuffling procedure described above. However, it is essential that the two ciphertext lists are shuffled under the same hidden permutation so that corresponding pairs remain aligned. The verifiable shuffle scheme of Groth~\cite{verifiableShuffling} guarantees that for a list of inputs $x_1,\ldots,x_k$ and outputs $y_1,\ldots,y_k$, there exists a permutation $\pi$ and randomizers such that $y_{\pi(i)} = x_i \cdot g^{r_i}$.
In our construction, we maintain two lists $(m_{1,1},\ldots,m_{1,k})$ and $(m_{2,1},\ldots,m_{2,k})$, which are shuffled independently to produce $(y_{1,1},\ldots,y_{1,k})$ and $(y_{2,1},\ldots,y_{2,k})$. To prove that the same permutation was applied to both lists, group members additionally prove that the products of $(y_{1,1} \cdot y_{2,1}, \ldots, y_{1,k} \cdot y_{2,k})
$ form a valid shuffle of $(m_{1,1} \cdot m_{2,1}, \ldots, m_{1,k} \cdot m_{2,k})$. A valid proof guarantees that each $y_{1,j}$ is paired with its correct counterpart $y_{2,j}$, preserving the correspondence between the two ciphertext components.

After shuffling, \textbf{decryption and check} takes place. we multiply $m1_j$ and $m2_j$ for all $j$. If $m1_j$ and $m2_j$ are shuffled using the correct group public key, they should be the form of $(m\cdot {gpk_i}^{r}, g^{r})$ and $(m^{-1}\cdot {gpk_i}^{r'}, g^{r'})$ where $r$ and $r'$ are two secret random numbers. Thus, the multiplication equals $({gpk_i}^{r+r'},g^{r+r'})$, that is, the encryption of $1$. If a wrong group public key is used, messages are in the form of $(m\cdot {gpk_i}^{r_1}\cdot {gpk_j}^{r_2}, g^{r_1+r_2})$ and $(m^{-1}\cdot {gpk_i}^{r_1'}\cdot {gpk_k}^{r_2'}, g^{r_1'+r_2'})$ where $gpk_i$ is the group public key used to encrypt the message, $gpk_j$ is the group public key used to shuffle the message, $r_1$ and $r_1'$ are random numbers determined by the previous group, $r_2$, $r_2'$ are  random numbers chosen by current group members. Then, the multiplication of two messages equals $({gpk_i}^{r_1+r_1'}\cdot{gpk_j}^{r_2+r_2'} ,g^{r_1+r_1'+r_2+r_2'})$. Hence, group members are able to check whether a message is in the correct form by decrypting the multiplication of two linked messages. If the decrypted result equals $1$, the integrity check of the message passes, and the owner of the corresponding public key is the trusted forwarder of this message. Otherwise, the wrong public key is used to shuffle this message, and the corresponding message should be dropped. Similarly to Atom~\cite{Atom}, a proof showing that a member has decrypted the message using the correct secret key is generated during the decryption of the messages to prevent malicious behaviors. In conclusion, group members know there is a message for a specific group member but are unable to link it to the corresponding incoming message. 

It is possible that the number of ciphertext pairs that decrypt to $1$ is less than the number of messages in the incoming message list. This may occur either because some group members misbehave during the decryption process or because some messages are malformed. In the former case, misbehavior can be detected since participants generate proofs of correct decryption, which can be verified by other group members to identify misbehaving parties. In the latter case, malformed messages are simply discarded without further consequences.

After decryption and check, the trusted forwarder of the current group chooses the next trusted forwarder in a neighboring group and initializes the \textbf{anonymous public key re-encryption}. To hide the identity of the next trusted forwarder, it asks group members to re-encrypt the ciphertext under an anonymous group public key $apk$. $apk$ is the product of the next trusted forwarder's group public key $gpk'_i$ and a randomly sampled public key $rpk$, chosen by the current trusted forwarder. Although every message is end-to-end encrypted, adversaries are able to link a received message to a specific sender based its cipher text.
Therefore, we leverage a zero-knowledge proof~\cite{zk-snark} that shows $\exists (i,rpk,rsk) \ where\  gpk_i'\cdot rpk = apk \land rpk = g^{rsk}$ to prevent the trusted forwarder from embedding its own public key into $apk$ to learn the content of a message. After re-encryption, the ciphertext pair is in the form of $(m\cdot apk^{r_1},g^{r_1})||(m^{-1}\cdot apk^{r_2},g^{r_2})$.

In the last step of the message processing (\textbf{decryption with the anonymous secret key}), the trusted forwarder removes the blinding factor $rpk$ in the ciphertext by computing $(g^{r_1})^{rsk} = rpk^{r_1}$ and dividing it out from the first component, yielding $(m \cdot {gpk'}_i^{r_1}, g^{r_1})$. Similarly, $(m^{-1}\cdot {gpk'_i}^{r_2},g^{r_2})$ is calculated for the inverse message. 
The trusted forwarder must generate a zero-knowledge proof that it knows the secret key $rsk$ corresponding to the $rpk$ used to generate $apk$, and that the transformation from $apk$-encryption to $gpk'_i$-encryption is performed correctly using this same $rsk$. This ensures that the ciphertext was not manipulated using incorrect keys. After this step, the resulting ciphertext pair is correctly encrypted under the group public key of the next trusted forwarder.

\section{Security}
\label{sec:security}

In \protocol, the trusted party routes the message to the next group. For parties who accidentally trust the adversary, the adversary is able to route the message to a malicious group and decrypt the message without any honest participants. Such a behavior cannot be prevented in re-encryption based mixing protocols if the forwarder is malicious. Therefore, we only prove the security of \protocol for honest groups similar to the security analysis of Atom \cite{Atom}. By an honest group, we mean a group such that either i) the forwarder selected as the trusted party is honest or ii) there are at least $n-n_{min}+1$ honest parties ensuring that there is at least one honest participant during the shuffling and decryption of messages. Compared with Atom, \protocol does not rely on the "many-trust assumption"~\cite{Atom} if the trusted party is honest and provides a similar security level when the trusted party is an adversary. 

We prove two security properties for \protocol: Message unlinkability ensures that outgoing messages and incoming messages of a group are unlinkable, whereas message integrity ensures that messages cannot be modified or forged, not even by the trusted forwarder. In our proofs, we assume that the adversary controls the trusted forwarder. This represents the strongest adversarial setting and avoids redundant arguments for cases in which the adversary controls at most $n_{min}$ group members but not the trusted forwarder, since in those scenarios the adversary obtains less information than when the trusted forwarder is compromised.

In the security games of our proofs, we simulate the shuffling process under the assumption that at least one participant in each group is honest. We use a simulator to simulate the shuffling process with at least one honest participant based on the random oracle model. Concretely, we use the shuffling scheme proposed by Groth~\cite{verifiableShuffling} that provides a Special Honest Verifier Zero-Knowledge (SHVZK) argument of knowledge for correctness shuffling. Informally, this means that a prover can convince an honest verifier that a set of ciphertexts is a valid re-encryption and permutation of an input set, without revealing the underlying permutation or plaintexts, while guaranteeing that any prover that succeeds must know such a permutation and the corresponding randomness. By the soundness of the SHVZK argument of knowledge for correct shuffling, no adversary can produce an incorrect shuffle that is accepted, and by the zero-knowledge property, no information about the permutation is leaked. Therefore, in our proofs we abstract away the concrete multi-party shuffling procedure and replace it with a simulator that models the shuffle via a random oracle. Specifically, the oracle selects a random permutation $\pi$ and re-randomizes all incoming messages so $\pi$ is hidden for the adversary.  Regarding the decryption process after shuffling, it is obvious that decryption of encrypted $1$ does not leak any information about the underlying plaintext message or the applied permutation. Therefore, we abstract away the concrete decryption procedure and give the adversary with the correctly shuffled ciphertexts in our security games. It simulates the outcome of the decryption phase while not weakening the adversary. 

\subsection{Integrity of messages}
Integrity implies that an adversary cannot alter the content of incoming messages during the execution of a protocol. To prove the integrity of \protocol, we simulate \protocol with a game $G_{MIntegrity}$ as described in Algorithm~\ref{game:Integrity} and show that a PPT adversary $A$ is not able to tamper with an incoming message with a non-negligible probability. 

In Algorithm~\ref{game:Integrity}, $k$ messages, trust public keys, and trust secret keys are generated using random oracle $O^{ram}$ at first. It implies that messages are sent from parties that are not controlled by the adversary, so $A$ does not know the trust secret key. 
The case when some messages are sent from parties controlled by $A$, it is equivalent to the case where there are fewer incoming messages, i.e., lower $k$, which is still covered by Algorithm~\ref{game:Integrity}. By the correctness of the underlying key generation protocol~\cite{dkgprotocol}, each participant should hold a public/secret key pair for each group public key after the key generation process. In our security proofs, we abstract away the concrete multi-party key generation protocol and instead simulate its outcome as follows: $A$ and $O^{ram}$ each generates $1$ public key for each group public key. It simulates the case where there is at least one honest party during the construction of $k$ group public keys. Afterwards, we use $O^{ram}$ to encrypt and sign generated messages and forward the signed ciphertext to $A$. During the encryption of a message, a random permutation $p$ for the $k$ group public keys is generated by $O^{ram}$ to ensure $A$ does not know which group public key is used to encrypt a message. Similarly, a random permutation $\pi$ is also generated for the incoming message. In Line~\ref{line:shuffling} of Algorithm~\ref{game:Integrity}, we let $O^{ram}$ permute the generated messages based on $\pi$ and re-randomize them to simulate the shuffling of incoming messages with at least one honest participant. 
Afterwards, $A$ chooses a shuffled message to be attacked. Next, $A$ chooses the group public keys of the next group and also the anonymous key $apk$ and the corresponding proof of $apk$. If the proof cannot be verified correctly, $A$ loses the game. If the proof is correct, $O^{ram}$ outputs the re-encrypted message with the selected $apk$ to $A$ to simulate a correct execution of the re-encryption algorithm. Here, we assume that a verifiable re-encryption scheme that satisfies correctness like the Chaum-Pedersen proof \cite{ChaumProof92} is used, so $A$ cannot forge the message in this step. If the proof is not correct, the honest party notices the existence of adversaries and stops the execution of the protocol. Afterwards, $A$ needs to replace the anonymous key with the actual group public key of the next proof. $A$ outputs the message encrypted by the next group public key and the proof that shows the re-encryption is done correctly. If the output message is not the original message and the proof can be verified successfully, $A$ wins the game.

\begin{definition}[Integrity] A protocol satisfies integrity if any Probabilistic Polynomial-Time (PPT) adversary $A$ can not win message integrity  game $G_{MIntegrity}$ in Algorithm~\ref{game:Integrity} with a non-negligible probability. 
\end{definition}

\begin{theorem}
If the shuffling scheme provides an SHVZK argument of knowledge for correctness of the shuffling, the re-encryption scheme satisfies correctness, the underlying zero-knowledge proof protocol satisfies correctness and soundness, and there is at least one honest participant during the execution of \protocol, \protocol satisfies message integrity. 
\end{theorem}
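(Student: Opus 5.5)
We argue by a sequence of hybrid games starting from $G_{MIntegrity}$. We show that any PPT adversary $A$ winning with non-negligible probability $\epsilon$ yields a break of one of the assumed properties. The adversary has four points at which it can influence the target ciphertext pair:
\begin{itemize}
\item[(i)] the shuffling of incoming messages;
\item[(ii)] the choice of $apk$ together with its proof;
\item[(iii)] the re-encryption under $apk$;
\item[(iv)] the final decryption with the anonymous secret key, which turns the $apk$-ciphertext into a $gpk'_i$-ciphertext and comes with a proof.
\end{itemize}
A winning output is a ciphertext under $gpk'_i$ whose plaintext differs from the original $m$ but whose final proof verifies. We make each step faithful in turn, and then show that the last step must be faithful as well.

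First, we replace the multi-party shuffle by the oracle shuffle of Line~\ref{line:shuffling}. Since at least one honest participant shuffles, soundness of Groth's SHVZK argument of knowledge implies that every accepted shuffle output is a permutation and re-randomization of its input. The same holds for the product list, which forces the pairs $(m_1,m_2)$ to stay aligned. The plaintext multiset, and the pairing of each $m$ with $m^{-1}$, are therefore preserved except with negligible probability; otherwise we extract a witness contradicting soundness. The decryption-and-check step is handled in the same way: proofs of correct decryption only reveal that the product of a pair encrypts $1$.

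Second, in step (ii), soundness of the zero-knowledge proof for $\exists(i,rpk,rsk):\ gpk'_i\cdot rpk=apk \land rpk=g^{rsk}$ lets us extract $rsk$ from $A$ with overwhelming probability. Correctness of the re-encryption (the Chaum--Pedersen-style proof) then guarantees that the ciphertext $A$ receives is an honest encryption of $m$ under $apk$, as already enforced by $O^{ram}$ in the game.

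The main step is (iv). $A$ outputs $(c_1',c_2')$ together with a proof that $c_1' = c_1/(c_2)^{rsk}$ for the same $rsk$ with $g^{rsk}=rpk$ that was used in $apk$. By soundness of this proof, an accepting transcript with $c_1'\neq c_1\cdot rpk^{-r_1}$ lets us extract either a second discrete logarithm of $rpk$ or a false equality-of-logarithms statement, which happens only with negligible probability. If the relation holds, then
\[
c_1' = m\cdot apk^{r_1}\cdot rpk^{-r_1} = m\cdot {gpk'_i}^{r_1},
\]
so the output decrypts to $m$. The inverse ciphertext $(m^{-1}\cdot {gpk'_i}^{r_2},g^{r_2})$ is handled identically, which rules out modification of either component. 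Summing the negligible losses over the hybrids shows that $A$ wins with negligible probability.

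I expect the main obstacle to be this last step: making precise that the proof binds the transformation to the same $rsk$ as in $apk$. This requires a joint statement, namely equality of discrete logarithms between $(g,rpk)$ and $(c_2, c_1/c_1')$. My first attempt would be to state the relation as a Chaum--Pedersen equality-of-logs and invoke special soundness via rewinding in the random oracle model. Forgery by $A$ outside the decryption step, for example of the sender's signature, is excluded because the trust keys come from $O^{ram}$ and $A$ never sees $TSK$.
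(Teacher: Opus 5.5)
Your proposal is correct, and its decisive step (iv) is exactly the paper's argument: soundness of the proof behind $Verify2$ gives $m_A = m_b\cdot apk^{r''_{1,b}}/rpk^{r''_{1,b}}$, and the constraint $apk = gpk'_i\cdot rpk$ collapses this to $m_b\cdot {gpk'_i}^{r''_{1,b}}$, with the inverse component handled in the same way. The only difference is presentation: the paper builds your steps (i)--(iii) into $G_{MIntegrity}$ itself, through the oracle shuffle and oracle re-encryption, so its proof states only this final-step argument.
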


\begin{proof}
In $G_{MIntegrity}$, after selecting the message to be attacked, $A$ chooses an anonymous public key $apk$ with a proof showing that $apk= gpk'_i\cdot rpk$ and $A$ know the corresponding secret key $rsk$ for $rpk$ where $i$ is an arbitrary index for the generated group public keys of $A$.  Then, $A$ receives the re-encrypted message and needs to output the corresponding outgoing message with a proof for the selected incoming message in Line~\ref{line:outgoingmessage} of Algorithm~\ref{game:Integrity}. By the soundness and correctness of the proof system underlying $Verify2$, acceptance implies the existence of witnesses such that $m_A = m_{b}\cdot apk^{r''_{1,b}} / {g^{r_{1,b}''\cdot rsk}}$ and $g^{r_{1,b}''\cdot rsk} = rpk^{r_{1,b}''}$ hold. So, $m_A = m_{b}\cdot apk^{r''_{1,b}} / rpk^{r_{1,b}''}$ holds. Because the proof also requires that $apk$ equals $gpk'_i\cdot rpk$, $m_A = m_{b}\cdot {gpk'}_{i}^{r''_{1,b}}$ holds ultimately. Similarly, $m'_A=m^{-1}_{b}\cdot {gpk'}_{i}^{r''_{2,b}}$. Therefore, any accepting output must correspond to a correct re-encryption of the challenged message. Hence, no PPT adversary $A$ can win $G_{MIntegrity}$ with a non-negligible probability.
\end{proof}

\begin{algorithm}
\caption{$G_{MIntegrity}$}\label{alg:MIntegrity}
\label{game:Integrity}
\begin{algorithmic}[1]
\State $ {(m_i,TPK_i,TSK_i)_{i\in1..k}}\leftarrow O^{ram}$
\State $A \rightarrow(pk_i)_{i \in 1..k}$
\State $ (pk^{honest}_i)_{i \in 1..k} \leftarrow O^{ram}$
\State $A\leftarrow(gpk_i:=pk_i\cdot pk^{honest}_i)_{i \in 1..k} \leftarrow O^{ram}$
\State $A \leftarrow ({m_i\cdot gpk_{p_i}^{r_{1,i}},g^{r_{1,i}}},{m_i^{-1}\cdot gpk_{p_i}^{r_{2,i}},g^{r_{2,i}}},sig_i)_{i\in1..k} \leftarrow O^{ram}$
\State $A \leftarrow $ $($ ${m_{\pi_i}\cdot gpk_{p_{\pi_i}}^{r_{1,{\pi_i}}}\cdot gpk_{j}^{r'_{1,ik+j}},g^{r_{1,\pi_i}+r'_{1,ik+j}}}, 
m_{\pi_i}^{-1}\cdot gpk_{p_{\pi_i}}^{r_{2,{\pi_i}}}\cdot gpk_{j}^{r'_{2,ik+j}},$ $g^{r_{2,{\pi_i}}+r'_{2,ik+j}}$ $)_{i\in1..k,j\in1..k}$ $\leftarrow O^{ram}$
\State $ let \ r''_{1,\pi_i}  = r_{1,{\pi_i}}+r^{'}_{1,k{\pi_i}+p_{\pi_i}}\ , \ r''_{2,\pi_i}  = r_{2,{\pi_i}}+r^{'}_{2,k{\pi_i}+p_{\pi_i}} $
\label{line:shuffling}
\label{line:chooseMessage}
\State $A \leftarrow($ ${m_{\pi_i}\cdot gpk_{p_{\pi_i}}^{r''_{1,{\pi_i}}},g^{r''_{1,\pi_i}}}$$,m_{\pi_i}^{-1}\cdot gpk_{p_{\pi_i}}^{r''_{2,{\pi_i}}},$ $g^{r''_{2,{\pi_i}}}$$)_{i\in1..k}$
\State $b \leftarrow A$
\State $A \rightarrow((gpk'_i)_{i \in 1..k} \ ,apk,\delta)$
\State $\delta: \text{A knows }rsk,rpk \text{ such that } \exists i: gpk'_{i}\cdot rpk = apk \land rpk = g^{rsk}$
\IF{$!Verify(apk,(gpk'_i)_{i \in 1..k},\delta)$}
\State$return\ 0$
\ENDIF{}
%\State $A\leftarrow({gpk_{p_{b}}^{r_{1,{b}}}\cdot gpk_{b}^{-r_{1,b}}},{gpk_{p_{b}}^{r_{2,{b}}}\cdot gpk_{b}  ^{-r_{2,b}}})\leftarrow O^{ram}$
\State $A\leftarrow ({m_{b}\cdot apk^{r''_{1,b}},g^{r''_{1,b}}},{m_{b}^{-1}\cdot apk^{r''_{2,b}},g^{r''_{2,b}}})\leftarrow O^{ram}$
\label{line:outgoingmessage}
\State $A\rightarrow ({m_A, g^{r''_{1,b}}},{m_A',g^{r''_{2,b}}},\delta')$
\State $\delta': \text{A knows }rsk,rpk \text{ such that } \exists i: gpk'_{i}\cdot rpk = apk \land rpk = g^{rsk} \land m_{b}\cdot apk^{r''_{1,b}} / m_A = {g^{r_{1,b}''\cdot rsk}} \land m_{b}^{-1}\cdot apk^{r''_{2,b}}/ m_A' = {g^{r_{2,b}''\cdot rsk}}$
\IF{$Verify2($ ${m_A, g^{r''_{1,b}}},{m_A',g^{r''_{2,b}}},{m_{b}\cdot apk^{r''_{1,b}},g^{r''_{1,b}}},$$m_{b}^{-1}\cdot apk^{r''_{2,b}},g^{r''_{2,b}},apk,\delta')\land m_A\neq m_{b}\cdot {gpk'_{i}}^{r''_{1,b}}$}
\State $return \ 1$
\ENDIF{}
\State $return \ 0$
\end{algorithmic}
\end{algorithm}

\subsection{Unlinkability of messages}
The unlinkability game $G_{MUnlinkability}$ described in Algorithm~\ref{alg:MUnlinkability} is almost the same as $G_{MIntegrity}$ except that the goal of $A$ is to link an outgoing message to an incoming message now. Instead of letting $A$ choose the index of a message by itself, $O^{ram}$ chooses the incoming messages with a randomly selected index $b$ in $G_{MUnlinkability}$, so $A$ can guess the index $b$ later. Afterwards, the same re-encryption and verification process as in $G_{MIntegrity}$ occurs. In Line~\ref{line:guessB} of Algorithm~\ref{alg:MUnlinkability}, however, $A$ needs to output the guessed index $b'$ instead of outputting a modified message. If the guessed $b'$ is equal to the index selected by $O^{ram}$ before, $A$ wins $G_{MUnlinkability}$.  

\begin{definition}[Unlinkability]
A protocol satisfies unlinkability if any Probabilistic Polynomial-Time (PPT) adversary $A$ cannot win the message unlinkability game $G_{MUnlinkability}$ described in Algorithm~\ref{alg:MUnlinkability} with a probability higher than $1/k+negl$ where $negl$ is a negligible probability and $k$ is the number of messages.
\end{definition}

\begin{theorem}
If the Decisional Diffie–Hellman assumption holds, the shuffling scheme provides an SHVZK argument of knowledge for correctness of the shuffling, the re-encryption scheme satisfies correctness, the underlying zero-knowledge proof protocol satisfies correctness and soundness, and there is at least one honest participant during the execution of \protocol, \protocol satisfies message unlinkability. 
\end{theorem}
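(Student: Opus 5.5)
We prove the theorem with a sequence of hybrid games. We start from $G_{MUnlinkability}$ and end in a game where the adversary's view does not depend on the challenge index $b$. In that final game the success probability is exactly $1/k$, and each transition costs at most a negligible amount.

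\textbf{Hybrid $H_0$.} This is the real game $G_{MUnlinkability}$, in which $O^{ram}$ samples $b$ uniformly from $\{1,\ldots,k\}$.

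\textbf{Hybrid $H_1$.} Here the shuffle is replaced by its ideal version. As in the integrity proof, we use the SHVZK property and the argument-of-knowledge property of Groth's shuffle, together with the presence of at least one honest participant. Together these let us model the expansion-and-shuffling as a uniformly random permutation $\pi$, hidden from $A$, followed by fresh re-randomization. The joint proof on the products $m_{1,j}\cdot m_{2,j}$ guarantees that both lists use the same $\pi$, so a simulator can produce all shuffle transcripts without knowing $\pi$.

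Likewise, the decryption proofs for the ``encryption of $1$'' check are simulated. The plaintext is publicly known to be $1$, so this reveals nothing beyond which positions are valid for which key index. That set is identical for every permutation, because $p$ is a uniformly random permutation of the keys. The distance between $H_0$ and $H_1$ is therefore bounded by the zero-knowledge error.

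\textbf{Hybrid $H_2$.} Here we use soundness of the proof $\delta$ for $apk$ and of the proof $\delta'$. Soundness forces $apk=gpk'_i\cdot rpk$ with $rpk=g^{rsk}$, where $rsk$ is known to $A$ and $gpk'_i$ is a key of the next group. As in the proof of the integrity theorem, this means $A$ cannot embed a key of its own choosing that would let it decrypt, beyond learning $rsk$. Apart from soundness failures, $H_2$ is identical to $H_1$.

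\textbf{Hybrid $H_3$ (the crucial step).} We replace the re-randomized ciphertexts that $A$ receives after the shuffle (Line~8 and the re-encrypted challenge) with encryptions under the same keys but with randomness that is independent of the incoming ciphertexts.

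Each shuffled ciphertext has the form $(m\cdot gpk^{r+r'}, g^{r+r'})$, where $r'$ is fresh and chosen by the honest oracle. Under DDH, the pair $(g^{r'}, gpk^{r'})$ is indistinguishable from $(g^{r'}, gpk^{s})$ with $s$ uniform, provided $A$ does not know the full secret key of $gpk$. This holds because $gpk=pk_i\cdot pk_i^{honest}$, and the honest share $sk_i^{honest}$ is never given to $A$.

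Even if $A$ controls the trusted forwarder, it needs the honest share to decrypt. Concretely, the reduction embeds a DDH tuple $(g, g^x, g^y, g^z)$ by setting $pk^{honest}=g^x/pk_A$, so that $gpk=g^x$, and by using $g^y$ as the fresh randomizer. A hybrid over the $2k$ ciphertexts (or a single random-self-reducibility argument) then shows that the outgoing ciphertexts are fresh, independent encryptions. The incoming ciphertexts and the outgoing ciphertexts become statistically independent of $\pi$, and hence independent of which outgoing ciphertext corresponds to $b$.

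\textbf{Conclusion.} In $H_3$ the adversary's whole view is independent of $b$, so $\Pr[b'=b]=1/k$ exactly. Summing the losses over the hybrids gives success probability at most $1/k+\mathrm{negl}$ in the real game.

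\textbf{Main obstacle.} The hardest part is step $H_3$. The adversary learns plaintext-related information through the decrypt-to-$1$ check and through the blinding key $rsk$. It can even obtain the plaintexts of the outgoing messages if the next group is malicious. Since the messages $m_i$ are distinct, plaintext equality alone does not link outgoing to incoming ciphertexts. The link is hidden by $\pi$, not by the plaintexts.

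So the argument must show that unlinkability rests only on the hidden permutation and on DDH re-randomization, not on plaintext secrecy. I would first try to structure the reduction so that it only needs DDH for the incoming-to-shuffled re-randomization, which is the part that hides $\pi$. The outgoing plaintexts are then revealed in full to $A$ by the reduction, and the hybrid argument checks that this does not correlate with $b$.
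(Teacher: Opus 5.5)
\textbf{Verdict: there is a gap.} Your overall route is the paper's: DDH hybrids that cut the algebraic link between incoming and post-shuffle ciphertexts, then a $1/k$ counting step. Your SHVZK and soundness hybrids are already built into the paper's game, and your embedding $pk^{honest}_i=g^x/pk_i$ is a detail the paper omits. The gap is in $H_3$, at exactly the point you call the main obstacle.

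\textbf{The re-encrypted challenge cannot be replaced.} You include $(m_b\cdot apk^{r''_{1,b}},g^{r''_{1,b}})$ in the DDH replacement. But $A$ chooses $gpk'_i$ and $rpk$ itself, so it may know $\log_g apk$ completely. It can then compute $apk^{r''_{1,b}}$ from $g^{r''_{1,b}}$ and detect any change. Your embedding only touches the current keys $gpk_i$, so it says nothing about $apk$. The paper explicitly leaves this ciphertext real for this reason.

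Consequently $A$ does learn $m_b$, together with the output position of $b$ through the shared component $g^{r''_{1,b}}$. Your claim that the whole view in $H_3$ is independent of $b$ is therefore unsupported, and your last paragraph only states a plan.

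The missing step, which the paper supplies informally, is this. Once every post-shuffle mask $gpk^{r''}$ is a uniform element, $m_b$ can be tied to the index $b$ only through the incoming ciphertexts of Line~5. Those do not reveal which of them encrypts $m_b$, because they are ElGamal encryptions under keys $gpk_i$ that contain an honest share. So unlinkability does rest on secrecy: the secrecy of the \emph{incoming} plaintexts under the current group's keys, not of the outgoing ones.

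A rigorous version needs one more DDH step on the Line~5 ciphertexts. That step must also hide their key indices $p_i$. Your $H_1$ argument about the decrypt-to-$1$ check tacitly assumes this, since the check reveals the key index of every output position and $p$ is a permutation.

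\textbf{The expanded ciphertexts are missing.} Your hybrid covers only the $2k$ correct ciphertexts of Line~8. But $A$ also receives all $2k^2$ expanded ciphertexts of Line~6, and each is linkable to its input in the same way. Dividing the entry for key $j$ by the corresponding incoming ciphertext gives $(gpk_j^{r'},g^{r'})$, an encryption of $1$. Unless these are replaced too, the view in your final hybrid still depends statistically on $\pi$. This is why the paper's sequence has $2k^2+2k$ steps.

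\textbf{What exactly gets replaced.} It must be the mask $gpk^{r''}$, replaced by an independent uniform element, as in your DDH formula and in the paper. ``Fresh encryptions under the same keys'' in the literal sense change nothing. The value $r''=r+r'$ is already independent of $r$, and two encryptions of the same plaintext under the same key remain statistically linked.
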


\begin{algorithm}
\caption{$G_{MUnlinkability}$}\label{alg:MUnlinkability}
\begin{algorithmic}[1]
\State $ {(m_i,TPK_i,TSK_i)_{i\in1..k}}\leftarrow O^{ram}$
\State $A \rightarrow(pk_i)_{i \in 1..k}$
\State $ (pk^{honest}_i)_{i \in 1..k} \leftarrow O^{ram}$
\State $A\leftarrow(gpk_i:=pk_i\cdot pk^{honest}_i)_{i \in 1..k} \leftarrow O^{ram}$
\State $A \leftarrow ({m_i\cdot gpk_{p_i}^{r_{1,i}},g^{r_{1,i}}},{m_i^{-1}\cdot gpk_{p_i}^{r_{2,i}},g^{r_{2,i}}},sig_i)_{i\in1..k} \leftarrow O^{ram}$
\label{line:sign}
\State $A \leftarrow $ $($ ${m_{\pi_i}\cdot gpk_{p_{\pi_i}}^{r_{1,{\pi_i}}}\cdot gpk_{j}^{r'_{1,ik+j}},g^{r_{1,\pi_i}+r'_{1,ik+j}}}, 
m_{\pi_i}^{-1}\cdot gpk_{p_{\pi_i}}^{r_{2,{\pi_i}}}\cdot gpk_{j}^{r'_{2,ik+j}},$ $g^{r_{2,{\pi_i}}+r'_{2,ik+j}}$ $)_{i\in1..k,j\in1..k}$ $\leftarrow O^{ram}$
\label{line:shufflecipher}
\State $ let \ r''_{1,\pi_i}  = r_{1,{\pi_i}}+r^{'}_{1,k{\pi_i}+p_{\pi_i}}\ , \ r''_{2,\pi_i}  = r_{2,{\pi_i}}+r^{'}_{2,k{\pi_i}+p_{\pi_i}} $
\State $A \leftarrow($ ${m_{\pi_i}\cdot gpk_{p_{\pi_i}}^{r''_{1,{\pi_i}}},g^{r''_{1,\pi_i}}}$$,m_{\pi_i}^{-1}\cdot gpk_{p_{\pi_i}}^{r''_{2,{\pi_i}}},$ $g^{r''_{2,{\pi_i}}}$$)_{i\in1..k}$
\label{line:correctshufflecipher}
\State $ b \leftarrow O^{ram}$
%\State $A \leftarrow($ ${m_{b}\cdot gpk_{p_{b}}^{r_{1,{b}}}\cdot gpk_{b}^{r_{1',b}},g^{r_{1,b}+r_{1',b}}},m_{b}^{-1}\cdot gpk_{p_{b}}^{r_{2,{b}}}\cdot gpk_{b}^{r_{2',b}},$ $g^{r_{2,{b}}+r'_{2,b}}$$)\leftarrow O^{ram} $
\label{line:generatenewgroupkey}
\State $A \rightarrow((gpk'_i)_{i \in 1..k} \ ,apk,\delta)$
\State $\delta: \text{A knows }rsk,rpk \text{ such that } \exists i: gpk'_{i}\cdot rpk = apk \land rpk = g^{rsk}$
\IF{$!Verify(apk,(gpk'_i)_{i \in 1..k},\delta)$}
\State$return\ 0$
\ENDIF{}
%\State $A\leftarrow({gpk_{p_{b}}^{r_{1,{b}}}\cdot gpk_{b}^{-r_{1,b}}},{gpk_{p_{b}}^{r_{2,{b}}}\cdot gpk_{b}  ^{-r_{2,b}}})\leftarrow O^{ram}$
\State $A\leftarrow ({m_{b}\cdot apk^{r''_{1,b}},g^{r''_{1,b}}},{m_{b}^{-1}\cdot apk^{r''_{2,b}},g^{r''_{2,b}}})\leftarrow O^{ram}$
\label{line:generateoutgoingmessage}
\State $A\rightarrow ({m_A, g^{r''_{1,b}}},{m_A',g^{r'_{2,b}}},\delta')$
\State $\delta': \text{A knows }rsk,rpk \text{ such that } \exists i: gpk'_{i}\cdot rpk = apk \land rpk = g^{rsk} \land m_{b}\cdot apk^{r''_{1,b}} / m_A = {g^{r_{1,b}''\cdot rsk}} \land m_{b}^{-1}\cdot apk^{r''_{2,b}}/ m_B = {g^{r_{2,b}''\cdot rsk}}$
\IF{$Verify2($ ${m_A, g^{r''_{1,b}}},{m_A',g^{r''_{2,b}}},{m_{b}\cdot apk^{r''_{1,b}},g^{r''_{1,b}}},$$m_{b}^{-1}\cdot apk^{r''_{2,b}},g^{r''_{2,c}},apk,\delta')\land m_A\neq m_{b}\cdot {gpk'_{i}}^{r''_{1,b}}$}
\label{line:guessB}
\State $A\rightarrow b'$
\IF {$b'==b$}
\State $return\ 1$
\ENDIF{}
\State $return \ 0$
\label{line:endogunlinkability}
\ENDIF{}
\end{algorithmic}
\end{algorithm}

\begin{algorithm}
\caption{$G_{MUnlinkability}^{2}$}
\label{game:unlinkability3}
\begin{algorithmic}[1]
\State $ {(m_i)_{i\in1..k}}\leftarrow O^{ram}$
\State $A \rightarrow(pk_i)_{i \in 1..k}$
\State $ (pk^{honest}_i)_{i \in 1..k} \leftarrow O^{ram}$
\State $A\leftarrow(gpk_i:=pk_i\cdot pk^{honest}_i)_{i \in 1..k} \leftarrow O^{ram}$
\State $A \leftarrow ({m_i\cdot gpk_{p_i}^{r_{1,i}},g^{r_{1,i}}},{m_i^{-1}\cdot gpk_{p_i}^{r_{2,i}},g^{r_{2,i}}},sig_i)_{i\in1..k} \leftarrow O^{ram}$
\State $A \leftarrow  ( {m_{\pi_i}\cdot {g^{x'_{i\cdot k+j}}},g^{r_{1,\pi_i}+r'_{1,ik+j}}}, 
m_{\pi_i}^{-1}\cdot  {g^{x'_{k^2+i\cdot k+j}}}, g^{r_{2,{\pi_i}}+r'_{2,ik+j}} )_{i\in1..k,j\in1..k}\leftarrow O^{ram}$
\label{randomShuffled}
\State $ let \ r''_{1,\pi_i}  = r_{1,{\pi_i}}+r^{'}_{1,k{\pi_i}+p_{\pi_i}}\ , \ r''_{2,\pi_i}  = r_{2,{\pi_i}}+r^{'}_{2,k{\pi_i}+p_{\pi_i}} $
\State $A \leftarrow($ ${m_{\pi_i}\cdot g^{x''_{i}},g^{r''_{1,\pi_i}}}$$,m_{\pi_i}^{-1}\cdot g^{x''_{i+k}},$ $g^{r''_{2,{\pi_i}}}$$)_{i\in1..k}$
\label{randomCorrectShuffled}
\State $ b \leftarrow O^{ram}$
%\State $A \leftarrow($ ${m_{b}\cdot gpk_{p_{b}}^{r''_{1,{b}}},g^{r''_{1,b}}},m_{b}^{-1}\cdot gpk_{p_{b}}^{r''_{2,{b}}},$ $g^{r''_{2,{b}}}$$)\leftarrow O^{ram} $
%\State $A \leftarrow($ ${m_{b}\cdot gpk_{p_{b}}^{r_{1,{b}}}\cdot gpk_{b}^{r_{1',b}},g^{r_{1,b}+r_{1',b}}},m_{b}^{-1}\cdot gpk_{p_{b}}^{r_{2,{b}}}\cdot gpk_{b}^{r_{2',b}},$ $g^{r_{2,{b}}+r'_{2,b}}$$)\leftarrow O^{ram} $
\State $A \rightarrow((gpk'_i)_{i \in 1..k} \ ,apk,\delta)$
\State $\delta: \text{A knows }rsk,rpk \text{ such that } \exists i: gpk'_{i}\cdot rpk = apk \land rpk = g^{rsk}$
\IF{$!Verify(apk,(gpk'_i)_{i \in 1..k},\delta)$}
\State$return\ 0$
\ENDIF{}
%\State $A\leftarrow({gpk_{p_{b}}^{r_{1,{b}}}\cdot gpk_{b}^{-r_{1,b}}},{gpk_{p_{b}}^{r_{2,{b}}}\cdot gpk_{b}  ^{-r_{2,b}}})\leftarrow O^{ram}$
\State $A\leftarrow ({m_{b}\cdot apk^{r''_{1,b}},g^{r''_{1,b}}},{m_{b}^{-1}\cdot apk^{r''_{2,b}},g^{r''_{2,b}}})\leftarrow O^{ram}$
\label{apkcipher}
\State $A\rightarrow b'$
\IF {$b'==b$}
\State $return\ 1$
\ENDIF{}
\end{algorithmic}
\end{algorithm}

\begin{proof}
We prove the unlinkability as a sequence of games, and remove steps that do not influence $A$'s winning rate gradually.

After the shuffling of messages, $A$ obtains a uniformly randomly permutation of re-randomized ciphertexts as described in Line~\ref{line:shufflecipher} of Algorithm~\ref{alg:MUnlinkability}. To show the re-randomized ciphertexts and the re-encryption of the chosen ciphertext are not helpful for $A$ to win the game, we define a sequence of hybrid games $H_0,H_1\cdots H_{2k^2+2k}$ where $H_0$ is $G_{MUnlinkability}$. In the first $2k^2$ games, the Diffie-Hellman elements of the shuffled ciphertexts in Line~\ref{line:shufflecipher} of Algorithm~\ref{alg:MUnlinkability} are replaced by $2k^2$ uniformly random group elements $g^{x'_i}$. We need $2k^2$ games because the shuffling of $k$ pairs of encrypted messages and inverse messages under $k$ group public keys generates $2k^2$ ciphertexts. Similarly, $2k$ Diffie–Hellman elements of correctly shuffled ciphertexts in Line~\ref{line:correctshufflecipher} of Algorithm~\ref{alg:MUnlinkability} are also replaced by $2k$ uniformly random group elements. For ciphertexts related to $apk$, we are not able to replace them with random group elements because $apk$ and $gpk'$ are chosen by $A$, and it can decrypt the ciphertexts to check whether it is from a random group element or a valid message. 
Finally, $G_{MUnlinkability}^{2}$ is obtained where all ciphers after shuffling except the re-encryption of the chosen message are replaced with random elements. 

In $G_{MUnlinkability}^{2}$, ciphertexts after shuffling received by $A$ are of the form $(m_i\cdot g^{x_i},g^r_i)$ and $({m_i}^{-1}\cdot g^{y_{i}},g^{r'_i})$ where all exponents are chosen independently and uniformly at random. For any fixed message $m_i$, the distribution of $m_i\cdot g^{x_i}$ is uniform. Therefore, ciphertexts in Line~\ref{randomShuffled} and Line~\ref{randomCorrectShuffled} of Algorithm~\ref{game:unlinkability3} are independent of the chosen index $b$ from the view of $A$ in $G_{MUnlinkability}^{2}$. 
For the ciphertexts in Line~\ref{apkcipher} of Algorithm~\ref{game:unlinkability3}, the only component that depends on the chosen index $b$ is the plaintext message $m_b$. We can assume that $A$ obtains $m_b$ by decrypting the ciphertexts using the secret key corresponding to $apk$. However, knowing $m_b$ does not reveal the corresponding incoming message, since the incoming ciphertexts are distributed as independent ElGamal encryptions of the plaintexts under uniformly random randomness. Consequently, conditioned on $m_b$, every incoming ciphertext is also equally likely to correspond to it.
Hence, $Pr[A \ wins \ G_{MUnlinkability}^{2}] = 1/k$. Meanwhile, if an adversary can distinguish $H_j$ from $H_{j+1}$, it implies that this adversary is capable of distinguishing whether the output ciphertext contains $gpk_{i}^{r_i}$ or a random number $g^{x_i}$, which can be used to break the DDH assumption. Hence, $|Pr[A(H_j)=1]-Pr[A(H_{j+1}=1)]|\leq Adv_{DDH}$ where $Pr[A(H_j)=1]$ is the probability that $A$ wins $H_j$.
Combining this with the win rate of $G_{MUnlinkability}^{2}$, we obtain 
\begin{gather*}
Pr[A \ wins \ G_{MUnlinkability}] \\\leq 1/k+ \sum_{j=0}^{2k^2+2k} |Pr[A(H_j)=1]-Pr[A(H_{j+1}=1)]| \\
\\ \leq 1/k+ (2k^2+2k)\cdot Adv_{DDH}
\end{gather*}
As $Adv_{DDH}$ is negligible, $(2k^2+2k)\cdot Adv_{DDH}$ is negligible. In conclusion, $A$ cannot win $G_{MUnlinkability}$ with a probability non-negligibly higher than $1/k$ and \protocol satisfies unlinkability.
\end{proof}

\section{Evaluation}
We evaluate \protocol both as a proof-of-concept Android implementation on real phones and as a simulation to scale to hundreds of users, allowing us to characterize the level of anonymity.

\subsection{Proof-of-Concept implementation}
We implement \protocol for a single group and evaluate the performance of \protocol with a different number of participants. As different groups are able to process messages independently, the results of our implementation can be used to estimate the throughput of \protocol. 

Our implementation~\cite{myImplementation} is based on the WIFI-aware feature of Android phones. In our implementation, users find nearby devices through WIFI-aware discovery, which provides information about other devices, including device ID, application version, and starting timestamp.  After receiving discovery messages from all members of the group, an index is assigned to each member deterministically based on their device ID and timestamp. Afterwards, each pair of participants establishes a direct connection and the group key construction is executed as described in Section~\ref{sec:protocol}. In our implementation, we use the P-256 curve based ECC encryption scheme to realize 128-bit security~\cite{bookforp256}. We let a specific member generate messages and broadcast the generated messages to simulate messages collected from neighbors. Afterwards, each party executes \protocol as described in Section~\ref{sec:protocol}. We also implement the Atom shuffling algorithm~\cite{Atom} using the same framework for comparison. 
%Afterwards, each group member uses its group key to shuffle the message pool and pass the shuffled messages to the $(index+1) \ mod \ n$ member where $index$ is the index of the current member and $n$ is the size of the current group. When shuffled messages are received by the initial shuffler again, it means that every member has finished the shuffling process. Then, the initial shuffler multiplies two parts of each message and initializes the decryption process for the multiplication. After every group member has performed the decryption of the messages, there will be some results equal to $1$. For the corresponding messages of those $1s$, the initial shuffler generates an anonymous key and starts the re-encryption process for the message using the generated anonymous key. After the re-encryption process, the member removes the anonymous factor and forwards the messages to a corresponding group. In this implementation, every step of the message processing in \protocol is implemented. 

In our evaluation, we vary the number of users in a group and measure the time required to mix $100$ messages and compare \protocol with Atom. We keep the number of messages fixed because our simulation shows that a higher number of messages in the pool only has a minor influence on the anonymity of messages. The time consumption of each step is recorded locally by each device and the final result is the average of results in all devices. For the key exchange between participants, it can be completed in a second, and we ignore it in our result. We use five smartphones in our evaluation: two Samsung S25 devices (Snapdragon 8 Elite, Android 15) and three Samsung S21 devices (Snapdragon 888, Android 13). Experiments with one or two devices are conducted using only S25 phones. For experiments involving more devices, we incrementally add one S21 phone each time the number of participants increases. Those phones are placed on the same table. 

\begin{figure}[h]
\centering
\includegraphics[width=0.35\textwidth]{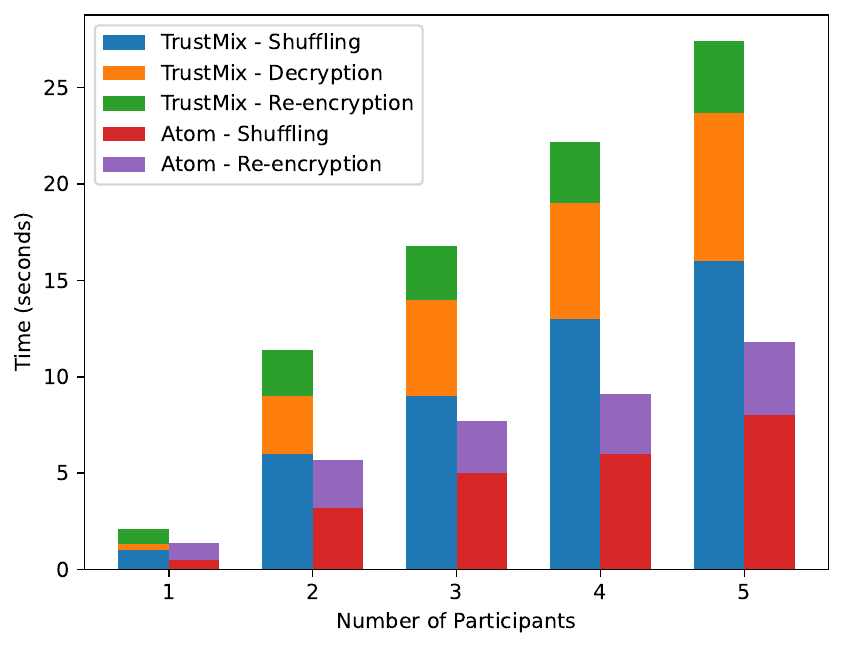}
\caption{Time consumption to mix 100 messages with different number of participants}
\label{fig:demoresult}
\end{figure}

As shown in Figure~\ref{fig:demoresult}, the time required for only one participant is low as no communication is involved, whereas for two or more participants, the required time increases linearly, as participants sequentially apply the shuffling, decryption, and re-encryption operations to a set of messages. For the comparison of \protocol and Atom, we find that the shuffling time of Atom is about half that of \protocol, which is due to the fact that \protocol processes two messages for every message, the encrypted message itself and its inverse. Shuffling of messages with multiple group public keys in a group does not significantly influence the speed of \protocol because different participants  shuffle messages using different group public keys in parallel. The same re-encryption process is performed by both \protocol and Atom, so they have a similar re-encryption time. Overall, the throughput of \protocol is approximately 45\% that of Atom. However, \protocol can deal with more misbehavior than Atom at the price of efficiency. For the increased number of messages, the processing time will increase linearly considering the time to shuffle, decrypt, re-encrypt, and exchange messages increase linearly as the number of messages increase. 
Although the throughput of \protocol for a 5-user group is only about 3.6 messages per second, it still shows that \protocol can be used in scenarios where messages are sent with low frequency and high anonymity is required. `

\subsection{Anonymity}
For the evaluation of anonymity, we simulate groups as nodes in a square system where each group has four neighbors on the top, bottom, left, and right sides. In practice, the number of neighboring groups depends on the density of devices, physical constraints, and the group formation strategy. To focus on the intrinsic anonymity properties of \protocol and to avoid confounding effects of network topologies, we only evaluate \protocol in a square topology.

We let each group generate $8$ messages per second and run an exiting mix network simulator~\cite{mixim} to evaluate the anonymity of the generated messages over 60 seconds. We run our simulation with 100 groups in a $10*10$ square 10 times and calculate the mean of obtained results. To simulate the message shuffling of \protocol, we let the simulator use the threshold mix strategy, that means a node shuffles messages in the message pool only when the number of messages reaches a defined threshold. We evaluate the anonymity of \protocol varying message pool size, number of mix nodes that every message goes through, and the number of corrupted groups. Specifically, for each output message, Mixim provides the probability distribution over possible senders, from which we compute entropy~\cite{serjantov2002towards} and report the average across all messages.
%Entropy~\cite{serjantov2002towards}, being the common metric to evaluate anonymity, is used to compare the privacy provided in the different settings, averaged over all messages. 

\begin{figure}[h]
\centering
\includegraphics[width=0.35\textwidth]{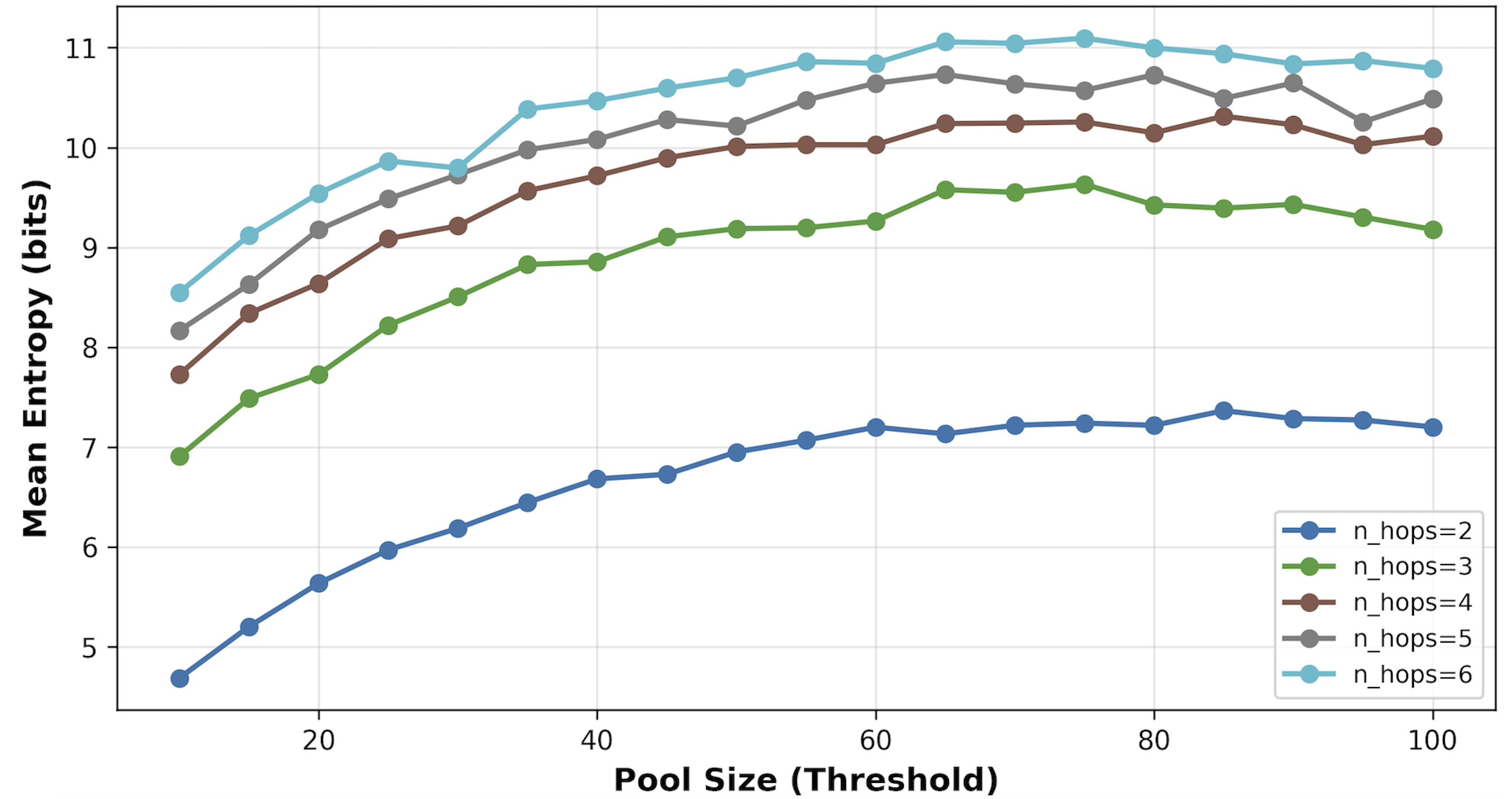}
\caption{Anonymity entropy of \protocol with a varying pool size and a different number of mixes}
\label{fig:differentPoolSize}
\end{figure}

Figure~\ref{fig:differentPoolSize} illustrates that the mixing significantly improves the anonymity of the message considering that forwarding messages without mixing results in an anonymity entropy equal to $1$, because the adversary is able to trace each message back to its sender based on traffic patterns. As the pool size increases, so does the entropy, as there are more messages that a message can be confused with. However, the increase tapers off at around a pool size of $60$, as the fixed number of messages per group during a period limits the anonymity. In addition, increasing the number of mixing hops further improves anonymity, since messages are shuffled across more groups with a larger set of messages. However, the benefit diminishes as the number of hops increases, indicating decreasing returns from a high number of  hops. 

\begin{figure}[h]
\centering
\includegraphics[width=7cm, height=5cm]{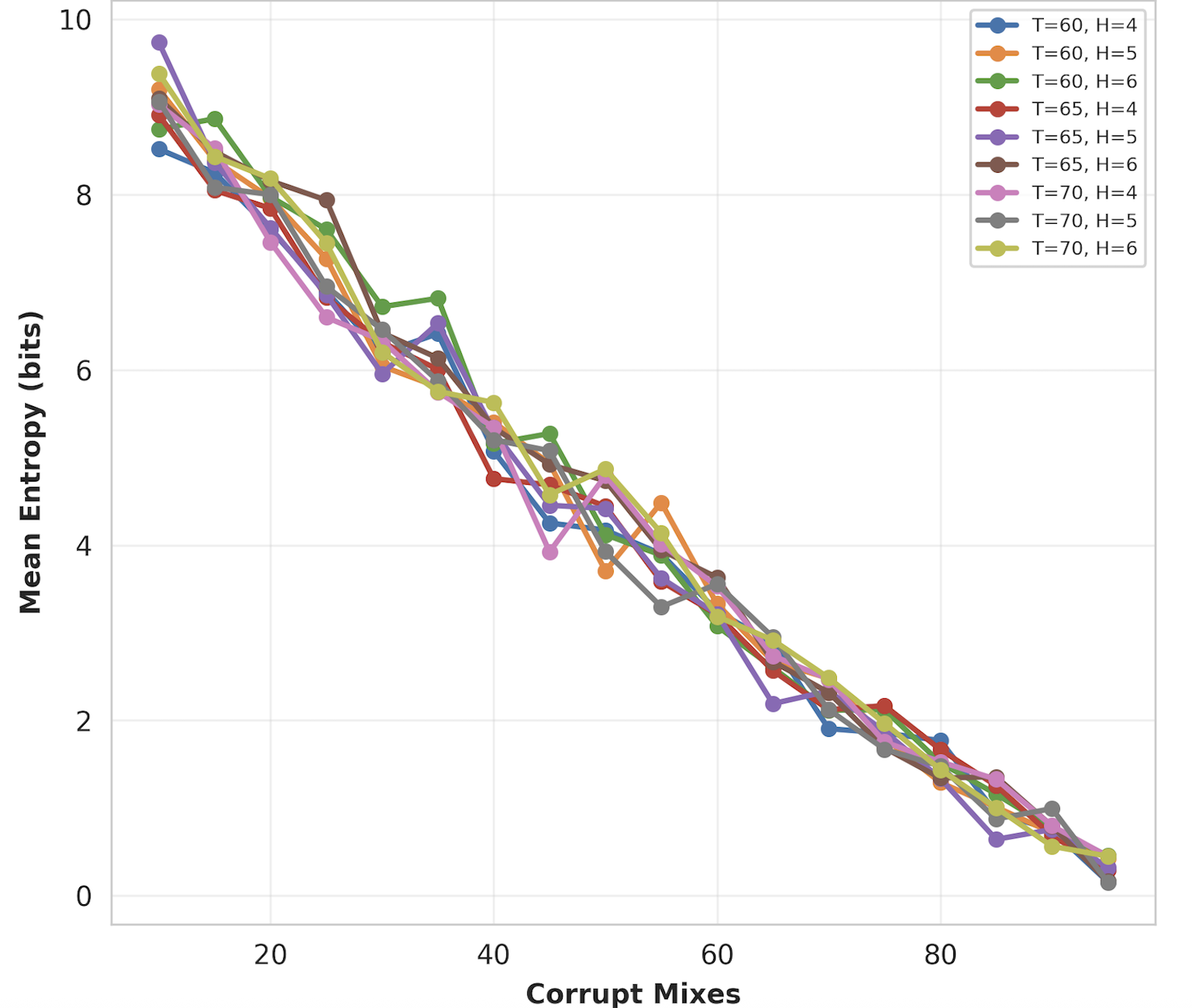}
\caption{Anonymity entropy of \protocol under different adversary ratios}
\label{fig:differentAdversary}
\end{figure}

In simulations with varying numbers of corrupted groups, we assume that the adversary is capable of decrypting and observing the plaintexts of all messages forwarded to corrupted groups. By a corrupted group, we mean that every member inside the group is controlled by the adversary. Consequently, once a message is forwarded to a corrupted group, subsequent mixing does not increase its anonymity entropy since the adversary can forward all incoming messages to another corrupted group.
Figure~\ref{fig:differentAdversary} shows that anonymity entropy decreases approximately linearly as the ratio of adversaries increases. Although the effectiveness of \protocol is limited when a large fraction of the network is controlled by adversaries, the results still indicate that it can provide a high level of anonymity when the majority of participants are honest.

\section{Conclusion}
In conclusion, \protocol realizes a significant anonymity increase in the square topology and even with the existence of active adversaries in the network. Meanwhile, our Proof-of-Concept implementation shows the feasibility of \protocol with real devices. Although the improvement of the anonimity in practice will depend on the underlying network topology and node connectivity, our results still show that \protocol is effective in enlarging the anonymity sets through mixing and can be expected increase the message anonymity in different topologies.

%%
%% The next two lines define the bibliography style to be used, and
%% the bibliography file.
\bibliographystyle{plain}
\bibliography{sample-base}

\end{document}